\documentclass[aip, 
               jap, 
               reprint,
               groupedaddress
               ]{revtex4-1}

\usepackage{graphicx}
\usepackage{dcolumn}
\usepackage{bm}
\usepackage{amsmath}  
\usepackage{amssymb}   
\usepackage{amsthm}  
\usepackage{datetime}        
\usepackage{epstopdf}
\usepackage{subcaption}
\usepackage{xcolor}
\usepackage[colorlinks,linkbordercolor=white, linkcolor=blue]{hyperref}


\newtheorem{theorem}{\bf Theorem}

\newtheorem{corollary}{\bf Corollary}
\newtheorem{remark}{\bf Remark}

\usepackage[british,UKenglish,USenglish,english,american]{babel}

\def\sp{\;\;\;\;}

\def\pardif#1#2{ \frac{\partial #1 }{ \partial #2 }}
\def\pardiff#1#2{ \frac{\partial^2 #1 }{ \partial #2^2 }}

\def\dif#1#2{ \frac{ d #1 }{ d #2 }}

\def\diff#1#2{ \frac{ d^2 #1 }{ d #2^2 }}

\def\av#1{\langle#1\rangle}

\def\vec#1{\mbox{\boldmath $#1$} }
\def\real{\mathbb{R}}

\def\thm#1{Theorem \ref{#1}}

\def\eq#1{Eq.(\ref{#1})}

\def\fig#1{Fig. (\ref{#1})}

\def\tab#1{Table (\ref{#1})}


\def\be{\begin{equation}}
\def\ee{\end{equation}}

\def\bea{\begin{eqnarray}}
\def\eea{\end{eqnarray}}

\def\bml{\begin{multline}}
\def\eml{\end{multline}}

\def\ba{\begin{align}}
\def\ea{\end{align}}

\newenvironment{packed_enumerate}{
\begin{enumerate}[(a)]
  \setlength{\itemsep}{1pt}
  \setlength{\parskip}{0pt}
  \setlength{\parsep}{0pt}
}{\end{enumerate}}

\def\vectorthree#1#2#3{\left(\begin{array}{c}{#1}\\
                                             {#2}\\
                                             {#3}\end{array}\right)}

\def\matrixthree#1#2#3#4#5#6#7#8#9{\left(\begin{array}{ccc}{#1}&{#2}&{#3}\\
                                                           {#4}&{#5}&{#6}\\
                                                           {#7}&{#8}&{#9}\end{array}\right)}
%

\def\eps{\epsilon}

\def\ve{\vec{e}}

\def\vr{\vec{r}}
\def\vR{\vec{R}}

\def\vx{\vec{x}}

\def\vnabla{\vec{\nabla}}

\def\vnabla{\vec{\nabla}}

\def\ba{\overline{a}}







\newcommand{\captionfonts}{\it\small}

\makeatletter  
\long\def\@makecaption#1#2{%
  \vskip\abovecaptionskip
  \sbox\@tempboxa{{\captionfonts #1: #2}}%
  \ifdim \wd\@tempboxa >\hsize
    {\captionfonts #1: #2\par}
  \else
    \hbox to\hsize{\hfil\box\@tempboxa\hfil}%
  \fi
  \vskip\belowcaptionskip}
\makeatother   


\begin{document}

\title{Critical review of Murray's theory for optimal branching in fluidic networks}


\author{R. Hagmeijer}
\affiliation{University of Twente, Faculty of Engineering Technology, P.O. Box 217, 7500 AE Enschede, The Netherlands}
\email{r.hagmeijer@utwente.nl}
\author{C.H. Venner}
\affiliation{University of Twente, Faculty of Engineering Technology, P.O. Box 217, 7500 AE Enschede, The Netherlands}
%
%
\keywords{Murray's law, optimal branching, fluidic networks}
%

\begin{abstract}
Murray's theory of constrained minimum-power branchings is critically reviewed in a generalised framework for a range of cases: channels with arbitrary cross-section shape, laminar flows of Newtonian and non-Newtonian fluids, and low and high Reynolds-number turbulent flows of Newtonian fluids. The theory states that the sum of hydraulic and metabolic power is minimised if and only if all channels satisfy the same relation between flow rate and effective radius. This relation leads to a generalised form of Murray's law. It is shown that, satisfying Murray's law is a necessary requirement for power minimisation, but not a sufficient requirement. The generalisation of Kamiya \& Togawa's law that holds for minimum-volume branchings, also holds for minimum-power branchings. It is a necessary requirement but not a sufficient requirement for both minimum-power and minimum-volume branchings. For symmetric branchings the two generalised laws of Murray and Kamiya \& Togawa become identical. 
\end{abstract}

\maketitle 
\section{Background}
During a lecture in 1809 \cite{Young 1809}, Young discussed the resistance of arterial networks. He considered a bifurcating network with symmetric branches, with a ratio of the parent channel radius and the daughter channel radii of approximately 1.26. It is not clear what the background of this rule was.

In a pioneering paper in 1926 \cite{Murray 1926a}, Murray derived an expression for the radius of an artery such that the power consumption of the artery is minimised for given flow rate. The key idea is to consider not only the power needed to maintain the {\em flow}, i.e., the product of pressure drop and flow rate, but also to maintain the {\em fluid}, i.e., the metabolic cost of blood. Assuming Hagen-Poiseuille flow of a Newtonian fluid in an artery of circular cross section, Murray derived that the cube of the radius, $R$, is proportional to the flow rate, $Q$:
\be
R^3/Q=constant.
\label{eq: R3/Q=const}
\ee
The constant is a fluid property which means that this ratio has the same value for all tubes in the arterial network. For Hagen-Poiseuille flow it additionally means that the wall shear stress has the same value in all arteries \cite{Zamir 1977}.

In the same year, Murray \cite{Murray 1926c} considered an arterial bifurcation consisting of a parent channel (index '0') and two daughter channels (indices '1' and '2'). Employing mass conservation and assuming incompressibility, i.e.,
\be
Q_0=Q_1+Q_2,
\label{eq: mass conservation bifurcation}
\ee
he derived that when \eq{eq: R3/Q=const} is satisfied in all channels, the radii of the tubes must satisfy
\be
R_0^3 = R_1^3 + R_2^3.
\label{eq: Murrays law rule of cubes}
\ee
Murray's analysis is considered the first explanation of Young's rule, since for symmetric bifurcating branchings, \eq{eq: Murrays law rule of cubes} leads to a ratio of $2^{1/3}\approx 1.25992$. 

In 1981, Sherman \cite{Sherman 1981} referred to \eq{eq: Murrays law rule of cubes} as 'Murray's law', which has been adopted by the scientific community since then. Unfortunately, Sherman also stated that \eq{eq: R3/Q=const} and \eq{eq: Murrays law rule of cubes} '{\em are alternative expressions of Murray's law}'. Kamiya {\em et al.} \cite{Kamiya etal 1974} pointed out that this is not true, since '{\em it is evident that $R_i$ and $Q_i$ ($i=0,1,2$) satisfying \eq{eq: R3/Q=const} always satisfy \eq{eq: Murrays law rule of cubes}, while inversely, $R_i$ and $Q_i$ satisfying \eq{eq: Murrays law rule of cubes}, can not always satisfy \eq{eq: R3/Q=const}}'. For example, let 
\be
R_1^3=\frac{1}{3}R_o^3, \sp R_2^3=\frac{2}{3}R_o^3, \sp Q_1=\frac{2}{3}Q_o, \sp Q_2=\frac{1}{3}Q_o, 
\ee 
then \eq{eq: mass conservation bifurcation} and \eq{eq: Murrays law rule of cubes} are both satisfied, but \eq{eq: R3/Q=const} is not:
\be
R_o^3/Q_o \ne R_1^3/Q_1 \ne R_2^3/Q_2.
\ee 
In other words, satisfying Murray's law, \eq{eq: Murrays law rule of cubes},  is a {\em necessary} requirement for power minimisation, but not a {\em sufficient} requirement. The misconception that Murray's law is a sufficient requirement has deeply entered into the literature, see for example 
Rossitti \cite{Rossitti 1995},
Dawson {\em et al.} \cite{Dawson etal 1999}, 
Painter {\em et al.} \cite{Painter etal 2006}, 
Hughes \cite{Hughes 2015}, and 
Stephenson \& Lockerby \cite{Stephenson and Lockerby 2016}.

The objective of the present paper is to critically review Murray's theory of constrained minimum-power branchings for fully developed flows in channels with arbitrary cross-section shape: laminar flows of Newtonian and non-Newtonian fluids, and low and high Reynolds-number turbulent flows of Newtonian fluids. Power minimisation  for three of these flows has been discussed in the literature, but the low-Reynolds number turbulent flow regime of a Newtonian fluid is new in this respect. It is shown that power is minimised if and only if all channels satisfy the same relation between flow rate and effective radius. Satisfying the corresponding generalised Murray law is necessary for power minimisation but not sufficient.  Kamiya \& Togawa's law that holds for minimum-volume branchings \cite{Kamiya and Togawa 1972}, also holds for minimum-power branchings. We will show that satisfying the generalisation of this law is a necessary requirement but not a sufficient requirement for both minimum-power and minimum-volume branchings. For symmetric branchings the two generalised laws of Murray and Kamiya \& Togawa become identical.

\section{Fully developed flows}
Four different cases of fully developed flow in channels are considered:
\begin{enumerate}
\item[(a)] laminar flow of a Newtonian fluid,
\item[(b)] laminar flow of a non-Newtonian fluid,
\item[(c)] low Reynolds number turbulent flow of a Newtonian fluid, smooth channel,
\item[(d)] high Reynolds number turbulent flow of a Newtonian fluid.
\end{enumerate}
The channel cross-section shapes are arbitrary except for the fourth regime where we assume a circular cross section. In all of these cases the pressure drop $\Delta p$ over the channel, i.e., the difference between the pressure at the entrance and the pressure at the exit, can be written in terms of the Darcy-Weisbach formulation:
\be
\Delta p = f\frac{1}{2}\rho U^2 \frac{L}{2R}.
\ee
In this expression, $f$ is the friction factor, $\rho$ is the mass density, $U$ is the cross-section averaged velocity, $L$ is the length of the channel, and $R$ is the effective channel radius, defined as the radius of a circular channel with the same cross-section area $A$:
\be
R\equiv\sqrt{\frac{A}{\pi}}.
\label{eq: effective radius}
\ee
By introducing the flow rate $Q=U\pi R^2$ one gets
\be
\Delta p = f\,\frac{\rho}{4\pi^2}\frac{Q^2}{R^5}L.
\label{eq: delta p}
\ee
For a Newtonian fluid with viscosity $\mu$ and a channel with average wall roughness $e$, the friction factor $f$ is a function of two dimensionless parameters: the Reynolds number,
\be
Re \equiv \frac{2 \rho U R}{\mu} = \frac{2}{\pi}\frac{\rho Q}{\mu R},
\label{eq: Reynolds number}
\ee
and the relative wall roughness
\be
\epsilon\equiv\frac{e}{2R}.
\ee 
For a non-Newtonian fluid with the viscosity satisfying a power law of the form
\be
\mu = \mu' |\dot{\gamma}|^{n-1},  \sp n\in\real^+,
\label{eq: powerlaw}
\ee
with $\dot{\gamma}$ the shear rate and $\mu'$ a constant, the friction factor is a function of of three dimensionless parameters: the non-dimensional group
\be
Re'\equiv \frac{\rho U^{2-n} \left(2R\right)^n}{\mu'} = \frac{2^n}{\pi^{2-n}}\frac{\rho}{\mu'}\frac{Q^{2-n}}{R^{4-3n}},
\ee
the relative wall roughness $\epsilon$, and the exponent $n$. When $n=1$ one recovers the constant viscosity model of a Newtonian fluid, and $Re'$ reduces to $Re$. For fixed wall roughness $e$, fixed viscosity coefficient $\mu$ or fixed $\mu'$ and $n$, one can write \eq{eq: delta p} as:
\be
\Delta p = c \,Q^a R^{-b} L.
\label{eq: dp in channel, generic}
\ee
In this expression, $a$ and $b$ are positive constants which depend on the flow regime, and $c$ is a positive constant that depends on the flow regime and on the the cross-section shape of the channel. The values of $a$, $b$, and $c$ are derived in the next paragraphs for the flows considered, and summarised in \tab{tab: a,b,c}.

\def\arraystretch{1.5}
\begin{table*}[t]\centering
\begin{center}
\begin{tabular}{|l|c|c|l|}
\hline
Flow regime & $a$ & $b$ & $c$ \\ \hline
laminar Newtonian & 1 & 4 & circular: $\frac{8\mu}{\pi}$ \\ 
                      &   &    & elliptic: $\left(\frac{\left(h_1/h_2\right)^2 + 1}{\left(h_1/h_2\right)}\right)\frac{4\mu}{\pi}.$ \\ 
                      &   &    & square: $\frac{16\mu}{0.562\pi^2} \approx \frac{9.062\mu}{\pi}.$ \\ 
laminar non-Newtonian & $n$ & $3n+1$ & circular: $2\mu'\left(\frac{3n+1}{n\pi}\right)^n$ \\ 
low-$Re$ turbulent Newtonian & $\frac{7}{4}=1.75$ & $\frac{19}{4}=4.75$ & circular: $6.64\times 10^{-2}\,\pi^{-\frac{7}{4}}\,\,
\mu^{\frac{1}{4}}\rho^{\frac{3}{4}}$ \\ 
high-$Re$ turbulent Newtonian & 2 & 5 & 
circular: $\frac{\rho}{4\pi^2}\left\{-1.8\log_{10}\left(\frac{\epsilon}{3.7}\right)\right\}^{-1}$  
\\ \hline
\end{tabular}
\end{center}
\caption{Summary of values of $a$, $b$, and $c$ in \eq{eq: dp in channel, generic} for the flows considered.}
\label{tab: a,b,c}
\end{table*}%

\subsection{Laminar flow of Newtonian fluid}
The fully developed laminar flow of a Newtonian fluid in a branching of smooth channels ($\epsilon = 0$) of arbitrary cross-section was considered by Emerson {\em et al.} \cite{Emerson etal 2006}. The axial velocity $w$ satisfies 
\be
\pardiff{w}{x} + \pardiff{w}{y} = \frac{1}{\mu}\dif{p}{z},
\ee
where $x$ and $y$ are the cartesian coordinates in the cross-sectional plane and $z$ is the cartesian coordinate along the channel. The differential equation shows that $w\sim\frac{1}{\mu}\dif{p}{z}$. For a given cross-section shape and effective radius $R$, the resulting flow rate $Q$ is a function of $\frac{1}{\mu}\dif{p}{z}$ and $R$. Dimension analysis  leads to
\be
Q \sim \frac{1}{\mu}\dif{p}{z} R^4 \sp 
\Rightarrow \sp \dif{p}{z}\sim \frac{\mu Q}{R^4},
\ee
and therefore
\be
a=1, \sp b=4.
\ee 

Several examples belonging to this class of flows are known. In case of a circular channel with Hagen-Poiseuille flow, the flow rate is given by Lamb \cite{Lamb 1932},
\be
Q = -\frac{\pi}{8}\frac{1}{\mu}\dif{p}{z} R^4, 
\ee
such that
\be
f=\frac{64}{Re},  \sp 
c=\frac{8\mu}{\pi}.
\ee
For an elliptic channel with semi-axes $h_1$ and $h_2$, the effective radius is $R=\sqrt{h_1h_2}$, and the flow rate is again given by Lamb \cite{Lamb 1932}:
\be
Q = -\frac{\pi}{4}\frac{h_1^3 h_2^3}{h_1^2 + h_2^2}\frac{1}{\mu}\dif{p}{z} = -\frac{\pi}{4}\frac{h_1 h_2}{h_1^2 + h_2^2}\frac{1}{\mu}\dif{p}{z}R^4.
\ee
As a consequence, 
\be
f=\left(\frac{\left(h_1/h_2\right)^2 + 1}{\left(h_1/h_2\right)}\right)\frac{32}{Re},  \sp 
c=\left(\frac{\left(h_1/h_2\right)^2 + 1}{\left(h_1/h_2\right)}\right)\frac{4\mu}{\pi}.
\ee
Finally, for a square channel with sides $2h$, the effective radius is $R=\frac{2h}{\sqrt{\pi}}$, and the flow rate is given by Cornish \cite{Cornish 1928}:
\be
\begin{split}
Q &= -\frac{4}{5}\frac{h^4}{\mu}\dif{p}{z}
\left(
1 - \frac{192}{\pi^5}\sum_{n=0}^\infty \frac{ \tanh{\left((2n+1)\frac{\pi}{2}\right)}    }{    (2n+1)^5    }
\right)\\
&\approx -0.562\frac{h^4}{\mu}\dif{p}{z} = -0.562\frac{\pi^2}{16}\frac{1}{\mu}\dif{p}{z} R^4.
\end{split}
\ee
\be
f\approx\frac{128}{0.562\pi Re} \approx \frac{72.50}{Re}, \sp 
c\approx\frac{16\mu}{0.562\pi^2} \approx \frac{9.062\mu}{\pi}.  
\ee

\subsection{Laminar flow of non-Newtonian fluid}
The fully developed laminar flow of a non-Newtonian fluid in a branching of smooth channels ($\epsilon = 0$) of arbitrary cross-section was considered by Revellin {\em et al.} \cite{Revellin etal 2009} and by Tesch \cite{Tesch 2010}. The axial velocity $w$ satisfies 
\be
\pardif{}{x}\left(\mu\pardif{w}{x}\right) + \pardif{}{y}\left(\mu\pardif{w}{y}\right) = \frac{1}{\mu}\dif{p}{z},
\label{eq: differential equation nN flow}
\ee
where $x$ and $y$ are the cartesian coordinates in the cross-sectional plane, $z$ is the cartesian coordinate along the channel, and $\mu$ is given by \eq{eq: powerlaw}. The shear rate is defined as \cite{Bird Stewart Lightfoot}:
\be
\dot{\gamma} \equiv \sum_{i=1}^3\sum_{j=1}^3\sqrt{\frac{1}{2}\gamma_{ij}\gamma_{ij}}, \sp
\gamma_{ij}\equiv \pardif{u_i}{x_j} + \pardif{u_j}{x_i},
\ee
which in the present case leads to
\be
\dot{\gamma} = \sqrt{\left(\pardif{w}{x}\right)^2 + \left(\pardif{w}{y}\right)^2}.
\ee
The differential equation \eq{eq: differential equation nN flow} and the power law \eq{eq: powerlaw} show that $w^n\sim\frac{1}{\mu}\dif{p}{z}$. For a given cross-section shape and effective radius $R$, the resulting flow rate $Q$ is a function of $\frac{1}{\mu}\dif{p}{z}$ and $R$, and dimension analysis leads to
\be
Q^n \sim \frac{1}{\mu'}\dif{p}{z} R^{3n+1} \sp 
\Rightarrow \sp \dif{p}{z}\sim \frac{\mu' Q^n}{R^{3n+1}},
\ee
and therefore 
\be
a=n, \sp b=3n+1.
\ee

In case of a circular channel the relation between flow rate and pressure gradient is given by Bird {\em et al.} \cite{Bird Stewart Lightfoot}:
\be
Q = \frac{n\pi R^3}{3n+1}\left(-\dif{p}{z}\frac{R}{2\mu'}\right)^{1/n},
\ee
and therefore
\be
f = 2^{n+3}\left(\frac{3n+1}{n}\right)^n \frac{1}{Re'}, \sp
c = 2\mu'\left(\frac{3n+1}{n\pi}\right)^n.
\ee

\subsection{Low Reynolds number turbulent flow of Newtonian fluid, smooth channel}
When the flow is turbulent and the Reynolds number is sufficiently low, $Re<10^5$, then the friction factor for a smooth channel, $\eps=0$, may be approximated by Blasius' formula \cite{Blasius 1912, Blasius 1913}:
\be
f=\frac{0.3164}{Re^{\frac{1}{4}}},
\label{eq: Blasius formula}
\ee
and the coefficients in \eq{eq: dp in channel, generic} for this flow regime become 
\be
a=\frac{7}{4},  \sp 
b=\frac{19}{4}, \sp 
c=6.64\times 10^{-2}\,\pi^{-\frac{7}{4}}\,\,\mu^{\frac{1}{4}}\rho^{\frac{3}{4}}.
\ee

\subsection{High Reynolds number turbulent flow of Newtonian fluid}
For sufficiently large Reynolds numbers, the friction factor corresponding to fully developed turbulent flow in a circular channel can quite accurately be described by Haaland's formula \cite{Haaland 1983}, which in the limit of high Reynolds numbers becomes:
\be
f = \left\{-1.8\log_{10}\left(\frac{\epsilon}{3.7}\right)\right\}^{-1}.
\label{eq: Haaland friction factor, high Re}
\ee
The coefficients in \eq{eq: dp in channel, generic} for this flow regime become 
\be
a=2, \sp
b=5, \sp
c=\frac{\rho}{4\pi^2}\left\{-1.8\log_{10}\left(\frac{\epsilon}{3.7}\right)\right\}^{-1}.
\ee

\section{Generalisation of Murray's theory for a single channel}
Murray's conjecture \cite{Murray 1926a} is that, at fixed channel length $L$ and flow rate $Q$, the channel radius-dependent power $P(R)$ consists of two contributions: one to maintain the flow rate against an adverse pressure gradient $\Delta p$, and one to maintain the fluid:
\be
P(R) \equiv \Delta p \, Q + \alpha V.
\label{eq: channel power}
\ee
In this expression, $V$ is the channel volume,
\be
V=\pi R^2 L,
\label{eq: channel volume}
\ee
and $\alpha$ is a fluid maintenance constant representing the cost per unit volume to maintain the fluid. Murray minimised the power with respect to $R$ assuming Hagen-Poiseuille flow and found that the optimal radius $R_*$ is proportional to the cube root of the flow rate. Furthermore, the corresponding power required to maintain the flow rate was found to be $\frac{1}{2}\alpha V_*$ with $V_*=\pi R_*^2 L$. The minimum power to maintain both the flow rate and the fluid is $P_*=\frac{3}{2}\alpha V_*$, which leads to the conclusion that the ratio of the power required to maintain the flow rate to the power required to maintain the fluid is $\frac{1}{2}$. Finally, Uylings \cite{Uylings 1977} derived an expression for the ratio of the non-optimised power and the power minimum:
\be
\frac{P}{P_*} = \frac{1}{3}\left(\frac{R}{R_*}\right)^{-4} + \frac{2}{3}\left(\frac{R}{R_*}\right)^{2}.
\ee
All of these results obtained for Hagen-Poiseuille flow can be generalised towards the generalised pressure-drop flow-rate relation \eq{eq: dp in channel, generic}, for channels with the effective radius defined in \eq{eq: effective radius}. The generalisation is given by the following theorem.

\begin{theorem}[minimum-power channel]
\label{th: channel}
The power $P(R)$ required to maintain a fully developed steady flow at fixed flow rate $Q$ in a channel of fixed length $L$ with effective radius $R$, pressure drop $\Delta p = c \,Q^a R^{-b} L$ and fluid maintenance coefficient $\alpha$, attains a global minimum if and only if $R=R_*$ with
\be
R_*^\frac{b+2}{a+1}
\equiv \left(\frac{bc}{2\pi\alpha}\right)^{1/(a+1)} Q.
\label{eq: R channel}
\ee
The global minimum of $P$ is
\be 
P_*=\left(\frac{b+2}{b}\right)\alpha V_*,
\label{eq: power minimum channel}
\ee
and the corresponding ratio of the two power contributions is
\be
\left(\frac{\Delta p \,Q}{\alpha V}\right)_*=\frac{2}{b}.
\label{eq: power ratio channel}
\ee
Finally, the ratio of the power $P$ and its minimum value $P_*$ is:
\be
\frac{P}{P_*} = \frac{2}{b+2}\left(\frac{R}{R_*}\right)^{-b} + \frac{b}{b+2}\left(\frac{R}{R_*}\right)^{2}.
\label{eq: dimensionless power}
\ee
\end{theorem}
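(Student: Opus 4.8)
The plan is to treat $P$ as an explicit scalar function of $R>0$ and to exploit that it is a sum of two competing power-law terms. Substituting $\Delta p = c\,Q^a R^{-b}L$ and $V=\pi R^2 L$ into \eq{eq: channel power} gives
\be
P(R) = c\,Q^{a+1}L\,R^{-b} + \alpha\pi L\,R^2 ,
\ee
which on $(0,\infty)$ is smooth and is the sum of a strictly positive, strictly decreasing hydraulic term (recall $b>0$) and a strictly positive, strictly increasing metabolic term. Consequently $P\to+\infty$ as $R\to 0^+$ and as $R\to\infty$, so a global minimiser must exist in the interior; the remaining work is to locate it and to establish that it is unique.

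First I would differentiate. Setting
\be
P'(R) = -b\,c\,Q^{a+1}L\,R^{-b-1} + 2\alpha\pi L\,R = 0
\ee
and cancelling the common factor $L$ yields $R^{b+2}=\frac{bc}{2\pi\alpha}Q^{a+1}$; raising both sides to the power $1/(a+1)$ reproduces \eq{eq: R channel}, so there is exactly one critical point $R_*$. Because each of $a,b,c,\alpha,L,Q$ is positive, the second derivative
\be
P''(R) = b(b+1)\,c\,Q^{a+1}L\,R^{-b-2} + 2\alpha\pi L
\ee
is strictly positive for all $R>0$, so $P$ is strictly convex on $(0,\infty)$. A strictly convex function possessing a critical point attains its global minimum there and nowhere else, which is precisely the ``if and only if'' asserted.

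With $R_*$ in hand, the three remaining identities follow by bookkeeping. Multiplying the stationarity condition by $R_*$ gives $b\,(\Delta p\,Q)_* = 2\,\alpha V_*$, i.e.\ the power ratio \eq{eq: power ratio channel}, since $(\Delta p\,Q)_*=c\,Q^{a+1}L\,R_*^{-b}$ and $\alpha V_*=\alpha\pi L\,R_*^2$. Adding the two contributions at $R_*$ then gives $P_*=(\Delta p\,Q)_*+\alpha V_*=(\tfrac{2}{b}+1)\alpha V_*$, which is \eq{eq: power minimum channel}. For the last identity I would write each term of $P(R)$ as its value at $R_*$ times the appropriate scaling factor, namely $(\Delta p\,Q)_*(R/R_*)^{-b}$ and $\alpha V_*(R/R_*)^2$, divide by $P_*$, and insert the coefficient fractions $(\Delta p\,Q)_*/P_*=2/(b+2)$ and $\alpha V_*/P_*=b/(b+2)$ to obtain \eq{eq: dimensionless power}.

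The only step that is more than routine differentiation is securing the global, two-sided character of the optimum. The product of the two competing terms scales as $R^{2-b}$ and is therefore not independent of $R$ in any of the regimes of \tab{tab: a,b,c}, so a single application of the arithmetic--geometric-mean inequality does not pin down the minimum. Strict convexity on $(0,\infty)$ is the appropriate tool: it delivers existence, uniqueness, and the equivalence simultaneously, after which every stated formula is algebraic.
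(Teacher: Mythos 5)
Your proof is correct and follows essentially the same route as the paper's: compute $P'(R)$ to locate the unique critical point \eq{eq: R channel}, observe $P''(R)>0$ to conclude the minimum is global, and obtain \eq{eq: power minimum channel}--\eq{eq: dimensionless power} by algebraic substitution. Your additions (coercivity at $R\to 0^+$ and $R\to\infty$, the explicit strict-convexity argument for uniqueness, and deriving the ratio \eq{eq: power ratio channel} directly from the stationarity condition before assembling $P_*$) only make the paper's rather terse reasoning more explicit, without changing the method.
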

\noindent
\hrulefill
\begin{proof} 
$P$ is a function of $R$ only and 
\be
\dif{P}{R}=\left(-bc\frac{Q^{a+1}}{R^{b+2}} + 2\alpha\pi\right)RL,
\ee
which shows that $\dif{P}{R}=0$ if and only if \eq{eq: R channel} holds. Furthermore
\be
\diff{P}{R}=\left(b\,(b+1)c\frac{Q^{a+1}}{R^{b+2}} + 2\alpha\pi\right)L,
\ee
which is positive for all $R$ showing that the minimum is a global minimum. The two expressions for the minimum power and the power ratio follow immediately by substitution. Finally, when one divides $P(R)$ by $P_*$ using \eq{eq: power minimum channel}, and by using \eq{eq: R channel} to substitute
\be
Q^{a+1}=\frac{2\alpha\pi}{bc} R_*^{b+2},
\ee
then \eq{eq: dimensionless power} follows immediately.
\end{proof}

\noindent The characteristic numbers appearing in \thm{th: channel} are summarised in \tab{tab: characteristic numbers} for the flows considered.
%
\def\arraystretch{1.5}
\begin{table*}[t]\centering
\begin{center}
\begin{tabular}{|l|c|c|c|c|c|c|c|}
\hline
Flow regime & $a$ & $b$ & $\frac{b+2}{a+1}$ & $\frac{b+2}{b}$ & $\frac{2}{b}$ & $\frac{2}{b+2}$ & $\frac{b}{b+2}$ \\ \hline
laminar Newtonian & 1 & 4 & $3$ & $\frac{3}{2}$ & $\frac{1}{2}$ & $\frac{1}{3}$ & $\frac{2}{3}$        \\
laminar non-Newtonian & $n$ & $3n+1$ & $3$ & $\frac{3n+3}{3n+1}$ & $\frac{2}{3n+1}$ & $\frac{2}{3n+3}$ & $\frac{3n+1}{3n+3}$ \\
low-$Re$ turbulent Newtonian & $\frac{7}{4}$ & $\frac{19}{4}$ & $\frac{27}{11}\approx2.45$ & $\frac{27}{19}$ & $\frac{8}{19}$ & $\frac{8}{27}$ & $\frac{19}{27}$ \\
high-$Re$ turbulent Newtonian & 2 & 5 & $\frac{7}{3}\approx2.33$ & $\frac{7}{5}$ & $\frac{2}{5}$ & $\frac{2}{7}$ & $\frac{5}{7}$ \\ \hline
\end{tabular}
\end{center}
\caption{Summary of characteristic numbers for the flows considered.}
\label{tab: characteristic numbers}
\end{table*}%

\section{Generalisation of Murray's theory for a branching}
Following the single-channel result in the previous section, we now consider a branching consisting of a parent channel connected to $N$ daughter channels in a branching point $\vx$, see \fig{fig: branching}. The channels are numbered from $0$ to $N$, with $0$ indicating the parent channel. The effective radii of the channels are $\vR\equiv(R_0, R_1,...,R_N)$, the fixed termination points of the channels are $\vx_i$, $i=0,1,...,N$, and the fixed flow rates in the daughter channels are $Q_i$, $i=1,...,N$. Furthermore, $Q_o$ is taken positive towards the branching point, whereas the other flow rates are taken positive away from the branching point. To satisfy mass conservation, the flow rates satisfy:
\be
Q_o=\sum_{i=1}^N Q_i.
\label{eq: mass conservation}
\ee
Finally, the lengths of the channels, $L_i$, are functions of the branching location:
\be
L_i\equiv|\vx_i-\vx|, \sp i=0,1,...,N.
\ee

\begin{figure}[h!]
\begin{center}
   \centering
   \includegraphics[width=0.4\textwidth]{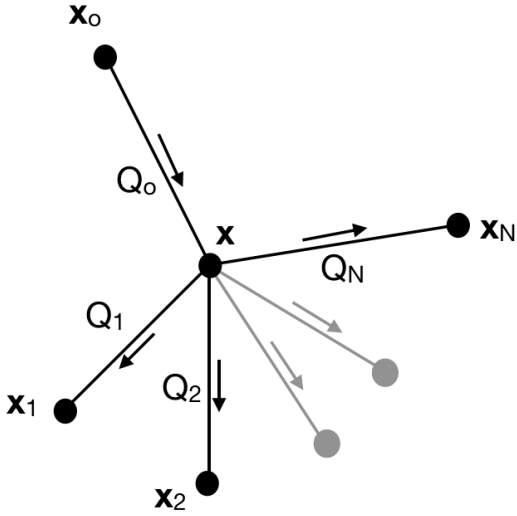} 
\caption{Branching with parent channel and $N$ daughter channels}
\label{fig: branching}
\end{center}
\end{figure}

\subsection{Minimised-power branching}
\label{se: Minimised-power branching}
The power $P(\vR,\vx)$ needed to maintain the flow rate and the fluid in the channel depends on the radii and lengths of the channels, and is the sum of the individual channel contributions given by \eq{eq: channel power}:
\be
P(\vR,\vx) \equiv \sum_{i=0}^N \left\{\Delta p \, Q + \alpha V\right\}_i,
\label{eq: branching power}
\ee
Assuming Hagen-Poiseuille flow in cylindrical channels, Murray \cite{Murray 1926c} derived for such a branching that, based on mass conservation and assuming optimised channels such that the radii are proportional to the cube roots of the flow rates, the sum of the cubes of the daughter radii must be equal to the cube of the parent radius. In the literature this relation is referred to as 'Murray's law'. Furthermore, Murray was able to derive expressions for the cosines of the angles between the channels in the bifurcation case ($N=2$), pre-assuming that all channels lie in a plane. The following theorem generalises these results.

\begin{theorem}[minimum-power branching]
\label{th: Murray, branching}
The power $P(\vR,\vx)$ required to maintain fully developed steady flows at fixed flow rates $Q_i$ in the branching channels of fixed lengths $L_i$ with effective radii $R_i$, pressure drops $\Delta p_i = c \,Q_i^a R_i^{-b} L_i$ and fluid maintenance coefficient $\alpha$, attains a global minimum if and only if $R_i=R_{i,*}$ with
\be
R_{i,*}^\frac{b+2}{a+1}\equiv \left(\frac{bc}{2\pi\alpha}\right)^{1/(a+1)} Q_i, \sp i=0,1,...,N,
\label{eq: R branching}
\ee
and $\vx=\vx_*$ with
\be
\sum_{i=0}^N R_{i,*}^2 \ve_{i,*}=0, \sp 
\ve_{i,*}\equiv\left(\vnabla L_i\right)_*=\frac{\vx_*-\vx_i}{|\vx_*-\vx_i|}.
\label{eq: branching point equations}
\ee
The global minimum of $P$ is
\be 
P_*=\left(\frac{b+2}{b}\right)\alpha \sum_{i=0}^N V_{i,*},
\label{eq: power minimum branching}
\ee
and the corresponding ratio of the two power contributions is
\be
\left(\frac{\sum_{i=0}^N\Delta p_i \,Q_i}{\sum_{i=0}^N\alpha V_i}\right)_*=\frac{2}{b}.
\label{eq: power ratio branching}
\ee
Finally, the following relation holds:
\be
R_{o,*}^\frac{b+2}{a+1}\left(\frac{R_{o,*}^\frac{b+2}{a+1}}{Q_o}\right)^m=
\sum_{i=1}^N R_{i,*}^\frac{b+2}{a+1}\left(\frac{R_{i,*}^\frac{b+2}{a+1}}{Q_i}\right)^m
,\,\,\, \forall m\in\real.
\label{eq: extended Murray's law}
\ee
\end{theorem}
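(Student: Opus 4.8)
The plan is to exploit the additive structure of \eq{eq: branching power}: since $P(\vR,\vx)=\sum_{i=0}^N\{\Delta p\,Q+\alpha V\}_i$ is a sum of single-channel powers, and the $i$-th term depends only on its own radius $R_i$ and on the length $L_i=|\vx_i-\vx|$, I would minimise in two stages — first over the radii $\vR$ at a fixed branching point $\vx$, then over $\vx$. The structural observation that makes the two stages decouple cleanly is that the channel-optimal radius of \thm{th: channel}, given by \eq{eq: R channel}, is \emph{independent of $L$}; hence the optimal radii $R_{i,*}$ of \eq{eq: R branching} depend only on the fixed flow rates $Q_i$ and not on the branching location.

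For the first stage I would fix $\vx$, so that each $L_i$ is fixed, and apply \thm{th: channel} channel by channel. Each term attains its global minimum at $R_i=R_{i,*}$, with the convexity ($\diff{P}{R}>0$) already established there, giving
\be
P(\vR,\vx)\ge P(\vR_*,\vx)=\left(\frac{b+2}{b}\right)\alpha\pi\sum_{i=0}^N R_{i,*}^2\,|\vx_i-\vx|,
\ee
with equality if and only if $\vR=\vR_*$. For the second stage I would minimise the right-hand side over $\vx$. Since $\vx\mapsto|\vx_i-\vx|$ is convex and $\vnabla_{\vx}|\vx_i-\vx|=(\vx-\vx_i)/|\vx-\vx_i|=\ve_i$, setting the gradient to zero reproduces the vector balance \eq{eq: branching point equations}, and convexity of the weighted sum of distances guarantees that this critical point is a global minimiser $\vx_*$. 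Chaining the two inequalities, $P(\vR,\vx)\ge P(\vR_*,\vx)\ge P(\vR_*,\vx_*)=P_*$ with equality if and only if $\vR=\vR_*$ and $\vx=\vx_*$, establishes the ``if and only if'' claim together with \eq{eq: power minimum branching}.

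The remaining statements then follow by bookkeeping. The power ratio \eq{eq: power ratio branching} is immediate: \thm{th: channel} gives $(\Delta p_i Q_i)_*=\frac{2}{b}\,(\alpha V_{i,*})$ term by term, so summing numerator and denominator separately preserves the ratio $2/b$. For the extended law \eq{eq: extended Murray's law} I would rewrite \eq{eq: R branching} as $R_{i,*}^{(b+2)/(a+1)}=K\,Q_i$ with the single constant $K=\left(\frac{bc}{2\pi\alpha}\right)^{1/(a+1)}$ common to all channels; then each factor $R_{i,*}^{(b+2)/(a+1)}/Q_i$ equals $K$, so both sides collapse to $K^{m+1}$ times a sum of flow rates, and mass conservation \eq{eq: mass conservation} identifies $\sum_{i=1}^N Q_i=Q_o$, making the two sides equal for every $m\in\real$.

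I expect the main obstacle to lie in the branching-point stage rather than the radii: one must confirm that the weighted Fermat--Weber functional $\sum_{i=0}^N R_{i,*}^2|\vx_i-\vx|$ is genuinely (and, away from collinear terminal configurations, strictly) convex, so that the balance equation \eq{eq: branching point equations} pins down a \emph{unique} global minimiser, and one must handle the non-differentiability at the terminal points $\vx_i$, where the unit vectors $\ve_i$ are undefined; assuming $\vx_*$ is interior to the configuration this is benign, but the strictness of the ``iff'' for the location relies on it. Everything else reduces to substitution and the single-channel convexity already proved in \thm{th: channel}.
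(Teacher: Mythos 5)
Your proposal is correct and follows essentially the same route as the paper: the paper likewise reduces $P$ to the single-channel minima via \eq{eq: dimensionless power}, obtaining $P\ge\left(\frac{b+2}{b}\right)\alpha\pi\sum_{i=0}^N R_{i,*}^2 L_i$ with equality iff $R_i=R_{i,*}$, and then minimises the weighted distance sum over $\vx$; its Taylor expansion with integral remainder, showing the second directional derivative $\sum_i \frac{R_{i,*}^2}{L_i}\left\{1-\left(\ve_i\cdot\vr\right)^2\right\}\ge 0$, is exactly the hands-on version of the Fermat--Weber convexity you invoke, and your bookkeeping for \eq{eq: power ratio branching} and \eq{eq: extended Murray's law} matches the paper's substitutions. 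Your explicit caveat about non-differentiability at the terminal points $\vx_i$ (and strictness failing for collinear configurations) is a genuine subtlety that the paper's proof silently assumes away, so flagging it is a small improvement rather than a deviation.
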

\noindent
\hrulefill
\begin{proof} 
Differentiation of $P$ with respect to $R_i$ gives
\be
\pardif{P}{R_i}=\left(-bc\frac{Q_i^{a+1}}{R_i^{b+2}} + 2\alpha\pi\right)R_i L_i,
\ee
which shows that $\pardif{P}{R_i}=0$ if and only if \eq{eq: R branching} holds.

Furthermore, the gradient of $P$ with respect to the branching point $\vx$ is:
\be
\nabla_{\vx}{P}=\left(c\frac{Q_i^{a+1}}{R_i^{b}} + \alpha\pi R_i^2\right) \vnabla L_i.
\label{eq: nabla P}
\ee
Because $L^2_i=|\vx-\vx_i|^2=\left(\vx-\vx_i\right)\cdot\left(\vx-\vx_i\right)$ we have
\be
2 L_i \vnabla L_i = \vnabla L^2_i = 2\left(\vx - \vx_i\right),
\ee
and therefore
\be
\vnabla L_i = \ve_i \equiv \frac{\vx-\vx_i}{|\vx-\vx_i|}.
\ee
By using \eq{eq: R branching} to eliminate $Q$ from \eq{eq: nabla P}, one obtains that $\nabla_{\vx}{P}=0$ if and only if
\be
\frac{b+2}{b}\alpha \pi \sum_{i=0}^N R_{i,*}^2 \ve_{i,*}=0.
\ee
Since $b>0$ and $\alpha>0$, this immediately implies \eq{eq: branching point equations}. 

\eq{eq: power minimum branching} can be found by substitution of $R_{i,*}$ and $\vx_*$ into the expression for $P(\vR,\vx)$.

To show that the power minimum $P_*$ is indeed a global minimum we write $P$ as a sum over the individual channel contributions (see \eq{eq: branching power}, \eq{eq: dimensionless power} and \eq{eq: power minimum branching}):
\be
\begin{split}
P&=\sum_{i=0}^N P_i = \sum_{i=0}^N \frac{P_i}{P_{i,*}}P_{i,*} \\
&= \sum_{i=0}^N \left\{ \frac{2}{b+2}\left(\frac{R_i}{R_{i,*}}\right)^{-b} + \frac{b}{b+2}\left(\frac{R_i}{R_{i,*}}\right)^{2} \right\}_i \\
&\times 
\left(\frac{b+2}{b}\right)\alpha\pi  R_{i,*}^2 L_{i},
\end{split}
\label{eq: P as a sum}
\ee
It should be noted that $P_{i,*}$ in this expression denotes the minimum power of channel $i$ for given length $L_i$, i.e., it has only be optimised with respect to $R_i$. The terms in between brackets in \eq{eq: P as a sum} are either larger than one,  or equal to one if and only if $R_i/R_{i,*}=1$. This can be seen from considering the function $f(x)\equiv \frac{2}{b+2} x^{-b} + \frac{b}{b+2} x^{2}$ for $x>0$ which has global minimum $f(1)=1$ since $f'(1)=0$ and $f''(x)>0$. Therefore
\be
P\left(\vR,\vx\right) \ge \left(\frac{b+2}{b}\right)\alpha\pi \sum_{i=0}^N  R_{i,*}^2 L_{i},
\label{eq: P as a simplified sum}
\ee
with equality if and only if $\frac{R_i}{R_{i,*}}=1$ for all $i$. It remains to be shown that the sum in \eq{eq: P as a simplified sum} has a global minimum when the branching point $\vx$ satisfies \eq{eq: branching point equations}. We write the branching point as a perturbation of the optimum:
\be
\vx = \vx_* + s\vr, \sp s\in \real, \sp \vr\in\real^3, \sp |\vr|=1.
\ee
A Taylor series expansion shows that 
\be
\begin{split}
\sum_{i=0}^N  R_{i,*}^2 L_{i} &= \left(\sum_{i=0}^N  R_{i,*}^2 L_i \right)_{s=0} \\
&+\left(\dif{}{s}\sum_{i=0}^N  R_{i,*}^2 L_i \right)_{s=0}s \\
&+\int_0^s\int_0^t\left(\diff{}{s}\sum_{i=0}^N  R_{i,*}^2 L_i \right)_{s=u}\,du\,dt,
\end{split}
\label{eq: expansion of sum}
\ee
The first and second derivatives in this expression are, respectively:
\be
\dif{}{s}\sum_{i=0}^N  R_{i,*}^2 L_i = \vnabla \left(\sum_{i=0}^N  R_{i,*}^2 L_i\right) \cdot \dif{\vx}{s} =
\left(\sum_{i=0}^N  R_{i,*}^2 \ve_i\right)\cdot\vr,
\label{eq: 1st derivative to s}
\ee
and
\be
\diff{}{s}\sum_{i=0}^N  R_{i,*}^2 L_i = \sum_{i=0}^N  R_{i,*}^2 \dif{\ve_i}{s}\cdot\vr=
\sum_{i=0}^N  \frac{R_{i,*}^2}{L_i} \left\{1-\left(\ve_i\cdot\vr\right)^2\right\},
\label{eq: 2nd derivative to s}
\ee
where we have used
\be
\dif{\ve_i}{s}=\frac{1}{L_i} \left\{\vr - \ve_i \left(\vnabla L_i \cdot \dif{\vx}{s}\right)\right\} =
\frac{1}{L_i} \left\{\vr - \ve_i \left(\ve_i\cdot\vr\right)\right\}.
\ee
With these expressions, \eq{eq: expansion of sum} can be written as
\be
\begin{split}
\sum_{i=0}^N  R_{i,*}^2 L_{i} &= \left(\sum_{i=0}^N  R_{i,*}^2 L_i \right)_{s=0}
+ \left(\sum_{i=0}^N  R_{i,*}^2 \ve_{i,*}\right)\cdot\vr s \\
&+ \int_0^s\int_0^t\left(\sum_{i=0}^N  \frac{R_{i,*}^2}{L_i} \left\{1-\left(\ve_i\cdot\vr\right)^2\right\}\right)_{s=u}\,du\,dt.
\end{split}
\ee
The second term on the right hand side is zero in view of \eq{eq: branching point equations}, and the third term on the right hand side of \eq{eq: expansion of sum} is non-negative since $|\ve_i|=1$, $|\vr|=1$ and therefore $\left(\ve_i\cdot\vr\right)^2\le 1$ with the inequality applying to at least one of the channels. Hence
\be
\sum_{i=0}^N  R_{i,*}^2 L_{i} \ge \left(\sum_{i=0}^N  R_{i,*}^2 L_i \right)_{s=0},
\ee
and therefore the power minimum is a global minimum.

\eq{eq: power ratio branching} follows directly from substitution of \eq{eq: R branching} into the expressions for $\Delta p_i$ and $V_i$ given by \eq{eq: dp in channel, generic} and \eq{eq: channel volume}, respectively.

Finally we prove \eq{eq: extended Murray's law} first by replacing the flow rates $Q_i$ in the mass conservation law \eq{eq: mass conservation} by means of \eq{eq: R branching}:
\be
R_{o,*}^\frac{b+2}{a+1}=\sum_{i=1}^N R_{i,*}^\frac{b+2}{a+1},
\ee
and then by multiplying each term $R_{i,*}^\frac{b+2}{a+1}$ by the factor $\left(R_{i,*}^\frac{b+2}{a+1}/Q_i\right)^{m}$,
which is independent of $i$ in view of \eq{eq: R branching}, for arbitrary values of $m$.
\end{proof}

\begin{remark}
If \eq{eq: R branching} is satisfied, then all channels are optimised in the sense that the powers corresponding to the  channels are minimised individually for fixed lengths $L_i$. In contrast, if \eq{eq: branching point equations} is satisfied, then the total volume of the branching is minimised for fixed radii $R_i$.
\end{remark}

\begin{corollary}[bifurcation topology]
The optimal branching point $\vx_*$ of a bifurcation, i.e., $N=2$, lies in the plane defined by $\vx_o$, $\vx_1$,  $\vx_2$, and the cosines of the smallest angles between each pair of channels involved are given by
\be
\left.\begin{array}{crr}
\ve_o\cdot\ve_1&=&-\frac{R_o^4+R_1^4-R_2^4}{2R_o^2R_1^2}, \\
\ve_o\cdot\ve_2&=&-\frac{R_o^4-R_1^4+R_2^4}{2R_o^2R_2^2}, \\
\ve_1\cdot\ve_2&=&\frac{R_o^4-R_1^4-R_2^4}{2R_1^2R_2^2}.
\label{eq: angle cosines branching}
\end{array}\right.
\ee
\end{corollary}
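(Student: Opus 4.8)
The plan is to extract everything from the branching-point condition \eq{eq: branching point equations}, specialised to the bifurcation $N=2$. Dropping the subscript $*$ for brevity, that condition reads
\be
R_o^2\,\ve_o + R_1^2\,\ve_1 + R_2^2\,\ve_2 = 0,
\label{eq: bifurcation balance}
\ee
a single vector identity from which both the planarity claim and the three cosine formulas follow.

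For the planarity statement, I would argue as follows. The coefficients $R_i^2$ are strictly positive, so \eq{eq: bifurcation balance} expresses each weighted unit vector as a linear combination of the other two; hence $\ve_o, \ve_1, \ve_2$ span a subspace of dimension at most two and are coplanar. Since the definition of $\ve_i$ gives $\vx_i = \vx_* - L_i\ve_i$, the three displacements $\vx_i - \vx_*$ are scalar multiples of these coplanar directions, so the termination points $\vx_o, \vx_1, \vx_2$ lie in the affine plane through $\vx_*$ spanned by them. Equivalently, $\vx_*$ lies in the plane determined by $\vx_o, \vx_1, \vx_2$.

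The cosines come from isolating one weighted vector at a time and taking squared norms. Writing $R_2^2\ve_2 = -(R_o^2\ve_o + R_1^2\ve_1)$ and using $|\ve_i|^2 = 1$, the squared length of the right-hand side is $R_o^4 + R_1^4 + 2R_o^2R_1^2(\ve_o\cdot\ve_1)$, which must equal $R_2^4$; solving for $\ve_o\cdot\ve_1$ reproduces the first line of \eq{eq: angle cosines branching}. Isolating $R_1^2\ve_1$ and then $R_o^2\ve_o$ and repeating the same step yields the second and third lines. Because the physical channels leave the junction along the directions $-\ve_i$, the opening angle between channels $i$ and $j$ has cosine $(-\ve_i)\cdot(-\ve_j) = \ve_i\cdot\ve_j$, which is exactly what the formulas report.

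I do not anticipate a genuine difficulty: once \eq{eq: bifurcation balance} is available the whole corollary is linear algebra --- coplanarity of three vectors summing to zero, followed by three applications of ``isolate and square.'' The only point meriting a remark is degeneracy: if $\vx_o, \vx_1, \vx_2$ are collinear, or if two of the $\ve_i$ are parallel, the plane is not uniquely determined and the statement should be read in the generic (non-degenerate) configuration; I would flag this but not let it obstruct the main argument.
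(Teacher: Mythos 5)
Your proof is correct, and it reaches the cosine formulas by a mechanically different route from the paper's. Both arguments start from the bifurcation balance $R_o^2\ve_o+R_1^2\ve_1+R_2^2\ve_2=0$ and obtain planarity from linear dependence, but the paper then takes the inner product of this identity with each of $\ve_o$, $\ve_1$, $\ve_2$, producing a coupled $3\times 3$ linear system in the three unknown cosines, and quotes its unique solution as \eq{eq: angle cosines branching}. You instead isolate one weighted vector at a time and take squared norms --- three decoupled applications of the law of cosines. Your variant buys something concrete: each cosine drops out of a single line of algebra, with no need to verify that the $3\times 3$ matrix is nonsingular (a fact the paper asserts only implicitly by saying the system ``has unique solution''); it also sidesteps a typographical slip in the paper's printed matrix, whose entry in row $2$, column $3$ should read $R_2^2$ rather than $R_1^2$ (as one sees by dotting the balance with $\ve_1$). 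Your write-up is moreover more careful than the paper's on two points it leaves implicit: the passage from coplanarity of the unit vectors to the claim about $\vx_*$ and the termination points, via $\vx_i=\vx_*-L_i\ve_i$, and the sign observation $(-\ve_i)\cdot(-\ve_j)=\ve_i\cdot\ve_j$ identifying the computed quantity with the cosine of the physical opening angle at the junction. Flagging the degenerate (collinear) configuration is appropriate and does not affect the cosine formulas themselves, since they follow from the vector balance alone.
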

\begin{proof} 
\eq{eq: branching point equations} shows that the vectors $\ve_i$, $i=0,1,2$, are linearly dependent, which means they lie in the same plane and, as a consequence, the optimal branching point $\vx_*$ lies in the plane defined by $\vx_o$, $\vx_1$,  $\vx_2$. Taking the inner product of \eq{eq: branching point equations} with the vectors $\ve_o$, $\ve_1$, and $\ve_2$, respectively, leads to the following linear system of equations:
\be
\matrixthree{R_1^2}{R_2^2}{0}
{R_o^2}{0}{R_1^2}
{0}{R_o^2}{R_1^2}
\vectorthree{\ve_o\cdot\ve_1}{\ve_o\cdot\ve_2}{\ve_1\cdot\ve_2}=
-\vectorthree{R_o^2}{R_1^2}{R_2^2},
\ee
which has unique solution \eq{eq: angle cosines branching}.
\end{proof}

\begin{remark}
\eq{eq: R branching} implies \eq{eq: extended Murray's law}, but \eq{eq: extended Murray's law} does not imply \eq{eq: R branching}. Instead, \eq{eq: extended Murray's law} implies
\be
R_{i,*}^\frac{b+2}{a+1}\equiv C Q_i, \sp i=1,2,...,N,
\ee
with the constant $C$ undetermined, and therefore \eq{eq: extended Murray's law} is not a weak formulation of \eq{eq: R branching}.
\end{remark}

\subsection{Generalised Murray and Kamiya-Togawa laws}

\paragraph{Murray's law.} For $m=0$, and leaving out the asterisks, \eq{eq: extended Murray's law} becomes
\be
R_{o}^\frac{b+2}{a+1}=\sum_{i=1}^N R_{i}^\frac{b+2}{a+1},
\label{eq: generalised Murray law N}
\ee
which we will refer to as the {\bf generalised Murray law}. \eq{eq: generalised Murray law N} defines a hyper-surface in the $(N+1)$-dimensional space of radii $R_{i}$. In contrast, \eq{eq: R branching} defines a single point on that hyper-surface. For that reason, both equations are not equivalent: \eq{eq: generalised Murray law N} is only a {\em necessary} condition for power-minimisation, whereas \eq{eq: R branching} is a {\em sufficient} condition for power-minimisation. In the special case of a bifurcation, $N=2$, and \eq{eq: generalised Murray law N} reduces to
\be
R_{o}^\frac{b+2}{a+1}=R_{1}^\frac{b+2}{a+1} + R_{2}^\frac{b+2}{a+1}.
\label{eq: generalised Murray law N=2}
\ee
For Hagen-Poiseuille flow this expression further reduces to the original law \eq{eq: Murrays law rule of cubes}:
\begin{equation}
R_{o}^3=R_{1}^3 + R_{2}^3.
\nonumber
\end{equation}

\paragraph{Kamiya-Togawa's law.} For $m=a$, leaving out the asterisks, \eq{eq: extended Murray's law} becomes
\be
\frac{R_{o}^{b+2}}{Q_o^a}=
\sum_{i=1}^N \frac{R_{i}^{b+2}}{Q_i^a},
\label{eq: generalised K&T law N}
\ee
which we will refer to as the {\bf generalised Kamiya-Togawa law}. \eq{eq: generalised K&T law N} defines a hyper-surface in the $(N+1)$-dimensional space of radii $R_{i}$, and \eq{eq: R branching} defines a single point of that hyper-surface. For $N=2$, \eq{eq: generalised K&T law N}  reduces to
\be
\frac{R_{o}^{b+2}}{Q_o^a}=
\frac{R_{1}^{b+2}}{Q_1^a} + \frac{R_{2}^{b+2}}{Q_1^a},
\label{eq: generalised K&T law N = 2}
\ee
and for Hagen-Poiseuille flow ($a=1$, $b=4$), this expression further reduces to 
\be
\frac{R_{o}^6}{Q_o}=
\frac{R_{1}^6}{Q_1}+\frac{R_{2}^6}{Q_2}.
\label{eq: K&T law}
\ee
This equation was derived by Kamiya \& Togawa \cite{Kamiya and Togawa 1972} as the result of volume minimisation for fixed flow rates and fixed pressure drops between the branching-entrance and exits. It is easy to show that \eq{eq: generalised K&T law N} similarly follows from volume minimisation for the generalised case. For fixed flow rates and fixed pressure drops between the branching-entrance and exits, i.e.,
\be
c Q_o^a R_o^{-b} L_o + c Q_i^a R_i^{-b} L_i = \mbox{constant} \sp \forall i>0,
\ee
the radii $R_i$ for $i>0$ become functions of the radius $R_o$ and the branching point $\vx$. Differentiation of this expression to $R_o$ gives:
\be
\pardif{R_i}{R_o}=-\frac{L_o}{L_i}\left(\frac{R_o}{R_i}\right)^{-(b+1)}\left(\frac{Q_o}{Q_i}\right)^a.
\label{eq: dRidRo}
\ee
Minimisation of the total branching volume $V=\sum_{i=0}^N \pi R_i^2 L_i$ requires $\pardif{V}{R_o}=0$ which, together with \eq{eq: dRidRo}, leads to the generalised law\eq{eq: generalised K&T law N}. Hence, \eq{eq: generalised K&T law N} apparently is a consequence of {\em power}-minimisation and a consequence of {\em volume}-minimisation. It therefore represents a necessary condition for both types of minimisation.

\paragraph{Symmetry.} \eq{eq: generalised Murray law N}, which is a necessary condition for power-minimisation, and \eq{eq: generalised K&T law N}, which is a necessary condition for {\em both} volume-minimisation {\em and} power-minimisation, are in general not equivalent since they define two different hyper-surfaces. However, in the special case of a symmetric branching,
\be
R_i=R_1, \sp Q_i=Q_1, \sp i=2,3,...,N,
\ee
\eq{eq: generalised Murray law N} and \eq{eq: generalised K&T law N} become 
\be
\frac{R_i}{R_o}=N^{-\frac{a+1}{b+2}}, \sp\mbox{and}\sp \frac{R_i}{R_o}=N^{-\frac{1}{b+2}}\left(\frac{Q_o}{Q_i}\right)^{-\frac{a}{b+2}}, 
\ee
respectively, with $i>0$. The ratio $Q_o/Q_i$ is equal to $N$, and therefore both equations are identical.  It is noted, however, that the two corresponding branchings do not need to be identical since $R_o$ can still be different.

\subsection{Wall shear stress}
For Hagen-Poiseuille flow of a Newtonian fluid through circular tubes, power minimisation of a branching leads to uniform shear stress in all channels \cite{Zamir 1977}. We will show that this can be generalised towards laminar flows of Newtonian and non-Newtonian fluids through channels of arbitrary cross-section, but not to turbulent flows. 

The wall shear stress $\tau$ for fully developed flow through a channel with arbitrary cross section can be computed from a force balance:
\be
\Delta p A = L\oint \tau ds, 
\ee
where the closed curve integral indicates integration over the intersection between the channel wall and a perpendicular cross-plane. The average shear stress is defined as
\be
\av{\tau}\equiv \frac{1}{\ell} \oint \tau ds, \sp \ell \equiv\oint ds,
\ee
where $\ell$ is the perimeter. Hence, using \eq{eq: effective radius} and \eq{eq: dp in channel, generic}, one gets
\be
\av{\tau} = \pi c Q^a R^{2-b} / \ell.
\ee
For a fixed cross-section shape, the perimeter is linear in the effective radius, $\ell \sim R$, and therefore the average shear stress is uniform when
\be
R^\frac{b-1}{a}/Q = constant.
\label{eq: uniform wall shear stress}
\ee
For power minimisation it is required that \eq{eq: R branching} holds, so both requirements are satisfied if
\be
\frac{b-1}{a} = \frac{b+2}{a+1},
\label{eq: power minimisation and uniform wss}
\ee
or, equivalently,
\be
b = 3a + 1.
\ee
This requirement is satisfied for the laminar flows mentioned in \tab{tab: a,b,c}, but not for the turbulent flows. In \tab{tab: characteristic numbers 2} the values of $\frac{b-1}{a}$ are compared to the values of $\frac{b+2}{a+1}$ appearing in the generalised Murray law \eq{eq: generalised Murray law N}.
%
\def\arraystretch{1.5}
\begin{table}[htp]
\begin{center}
\begin{tabular}{|l|c|c|c|c|c|c|c|}
\hline
Flow regime & $a$ & $b$ & $\frac{b+2}{a+1}$ & $\frac{b-1}{a}$ \\ \hline
laminar Newtonian & 1 & 4 & $3$ & $3$ \\
laminar non-Newtonian & $n$ & $3n+1$ & $3$ & $3$  \\
low-$Re$ turbulent Newtonian & $\frac{7}{4}$ & $\frac{19}{4}$ & $\frac{27}{11}\approx2.45$ & $\frac{15}{7}\approx2.14$  \\
high-$Re$ turbulent Newtonian & 2 & 5 & $\frac{7}{3}\approx2.33$ & 2  \\ \hline
\end{tabular}
\end{center}
\caption{Summary of powers appearing in \eq{eq: power minimisation and uniform wss} for the flows considered.}
\label{tab: characteristic numbers 2}
\end{table}%

\section{Conclusions}
Murray's theory of minimum-power branchings was derived almost a century ago for channels with circular cross-section shape and Hagen-Poiseuille flow of a Newtonian fluid. It can be extended towards a range of other fully developed flows including channels with arbitrary cross-section shape, low and high Reynolds-number turbulent flows of Newtonian fluids, and laminar flows of non-Newtonian fluids. Minimisation of power is equivalent to the radii and flow rates of the branching channels satisfying the same law:
\be
\begin{split}
&\mbox{minimum power (fixed $Q_j, j=1,2,...$)} \\
&\sp\sp\sp\Leftrightarrow \\
& R_i^n/Q_i=\mbox{constant} \sp i=0,1,2,...,
\label{eq: equivalence minimum power}
\end{split}
\ee
where $n=(b+2)/(a+1)$, with $a$ and $b$ dependent on the flow regime at hand.
Taking into account mass conservation, i.e., $\sum_{i=0}^N Q_i=Q_o$, leads to a generalisation of Murray's law:
\be
\begin{split}
&\mbox{minimum power (fixed $Q_j, j=1,2,...$)}  \\
&\sp\sp\sp\Rightarrow \\
&R_{o}^{n}=\sum_{i=1}^N R_{i}^{n}.
\label{eq: consequence minimum power}
\end{split}
\ee
It is emphasised that the first statement, \eq{eq: equivalence minimum power}, expresses an {\em equivalence}, and that the second statement, \eq{eq: consequence minimum power}, expresses a {\em consequence}. In other words, satisfying Murray's law is a {\em necessary} requirement for power minimisation, but not a {\em sufficient} requirement for power minimisation,

Kamiya \& Togawa's theory of minimum-volume branchings can also be extended to the flows mentioned above:
\be
\begin{split}
&\mbox{minimum volume (fixed $Q_j, \Delta p_{oj}, j=1,2,...$)} \\
&\sp\sp\sp\Rightarrow \\
&\frac{R_{o}^{b+2}}{Q_o^a}=\sum_{i=1}^N \frac{R_{i}^{b+2}}{Q_i^a}.
\label{eq: consequence minimum volume}
\end{split}
\ee
Minimum-{\em power} branchings also satisfy this generalised law:
\be
\begin{split}
&\mbox{minimum power (fixed $Q_j, j=1,2,...$)} \\
&\sp\sp\sp\Rightarrow \\
& \frac{R_{o}^{b+2}}{Q_o^a}=\sum_{i=1}^N \frac{R_{i}^{b+2}}{Q_i^a},
\label{eq: consequence minimum power, 2}
\end{split}
\ee
which reflects that \eq{eq: consequence minimum volume} and \eq{eq: consequence minimum power, 2} both represent {\em consequences} and not {\em equivalences}. 

For a symmetric branching, the two consequences \eq{eq: consequence minimum power} and \eq{eq: consequence minimum volume} become identical:
\be
\frac{R_i}{R_o}=N^{-1/n}, \sp i=1,2,...,
\ee
but the branchings can still be different because $R_o$ can be different.

Finally it has been shown that the requirements of power minimisation on the one hand, and uniform perimeter-averaged wall shear stress on the other, both lead to the same Murray law in case of laminar flow of Newtonian and non-Newtonian fluids, but to different laws in case of turbulent flow.


\section{References}

\end{document}